\newcommand{\R}{\mathbb{R}}
\begin{document}
\title{Controlled Recurrence of a Biped with Torso}
\titlerunning{Controlled Recurrence of a Biped with Torso}
%
\author{Adrien Le Co\"ent \inst{1} \and
Laurent Fribourg \inst{2}}
\authorrunning{A. Le Co\"ent, L. Fribourg}
%
\institute{Department of Computer Science, Aalborg University \\
Selma Largerl\o fs Vej 300, 9220 Aalborg, Denmark \\
\email{adrien@cs.aau.dk} \and
LSV - ENS Paris-Saclay $\&$ CNRS $\&$ INRIA, University Paris-Saclay \\
91 Avenue du Pr\'esident Wilson, 94230 Cachan, France \\
\email{fribourg@lsv.ens-paris-saclay.fr}}
\maketitle              
\begin{abstract}
We have recently used a symbolic reachability method for controlling the 
stability of special hybrid systems called ``sampled switched systems''. We show here how the method can be extended in order to control the stability of more general hybrid systems with guard conditions and state resets. We illustrate the method through the example of a biped robot with 6 state variables, using a proportional-derivative (PD) controller. More specifically, we isolate a state region $R$ such that, starting from a state located in~$R$ just after a footstep, the PD-control makes the robot state return to $R$ at the end of the following footstep.

\keywords{Nonlinear systems  \and Verification  \and Hybrid systems.}
\end{abstract}
\section{Introduction}

The study of bipedal robot control has been pioneered by McGeer \cite{McGeer}.
The original model considered in \cite{McGeer} had 4 state variables.
Today the experimental implementations of bipedal robots may have 12 state
variables ~\cite{AHHFMPASG17}. In order to synthesize controllers which are 
{\em correct-by-construction} for such sophisticated robots, we need to 
obtain {\em reduced-order dynamics}. 
This is done by designing outputs and classical controllers driving these 
outputs to zero. The resulting controlled system evolves on a lower dimensional manifold and is described by the hybrid zero dynamics (HZD) governing the 
remaining degrees of freedom~\cite{AHHFMPASG17}. In a second step, a
symbolic method constructs a {\em finite-state abstraction} (see~\cite{Tabuada:2009}) of the HZD, 
then synthesizes a controller enforcing the desired specifications to be satisfied on the full order model.
The interest in itself for constructing such finite-state abstractions has been 
illustrated in \cite{HAT17} where a control correct-by-construction is synthesized, without need for a preliminary step of order reduction, 
in the case of a bipedal model with 4 state variables.

In this paper, we show that an alternative symbolic method can 
be used in order to synthesize
{\em directly} (i.e., without constructing a finite-state abstraction)
a controller for a bipedal model with 6 state variables.
Our  symbolic method consists in
isolating a zone $R$ of the 
6-dimensional state space, and proving $R$ to be a basin of (recurrent) attraction.

The plan of the paper is as follows: in Section \ref{sec:method}, 
we present our symbolic direct method for proving controlled recurrence; in Section \ref{sec:control}, we apply the method to a bipedal model with 6 state variables; we conclude in Section \ref{sec:conc}.


\section{Controlled Recurrence Method}\label{sec:method}

In the context of the biped robot with torso \cite{feng2013biped}, the state $x$ is 
a vector of dimension~6 of the form
$(\dot{\theta}_1, \dot{\theta}_2, \dot{\theta}_3, \theta_1, \theta_2,\theta_3)$,
where $\theta_1$ (resp. $\theta_2$, $\theta_3$) is the angle between the 
stance leg (resp. swing leg, torso) with the vertical.
See Figure \ref{fig:schema}.
\begin{figure}[h]
\centering
\includegraphics[width=0.4\textwidth]{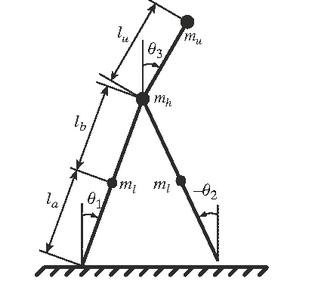}
\caption{Schematic of the robot (taken from \cite{feng2013biped})}
\label{fig:schema}
\end{figure}
The robot is controlled by a proportional-derivative (PD)
controller depending on the angle between the torso and the stance leg (viz., $(\theta_1-\theta_3)$) while the swing leg remains free. We suppose that there is a finite 
set values $U=\{\theta_{SP}^1,\dots,\theta_{SP}^p\}$ for the possible setpoints 
(objective values) assigned to the PD-controller. For each 
value $\theta_{SP}^i$ ($1\leq i\leq p$), the dynamics of the robot is governed by a differential equation of the form $\dot{x}=f_i(x)$. The footstep of the 
robot terminates when there is ``collision'' of the swing leg with the level 
ground, which corresponds to the equation 
\begin{equation}
\theta_1 + \theta_2 = 0.
\label{eq:collision}
\end{equation}
%
At this point, a {\em reset} of the robot
state is performed instantaneously, and a new 
footstep starts. This footstep is governed by an equation of the form 
$\dot{x}=f_j(x)$, where $j$ is the new selected control index; and so on 
iteratively.

The control problem consists in selecting at each collision, an
appropriate control index $i\in\{1,\dots,p\}$ which makes the robot perform
a new footstep (i.e., reach a state  with $\theta_1+\theta_2=0$).
Such a problem can be solved using a {\em controlled recurrence} procedure
(see, e.g., \cite{fs-book13}):

\begin{enumerate}
\item {\em isolation}: isolate a 
rectangular zone $R$ (corresponding to the ``recurrent zone''). 
\item {\em bisection}: divide the zone $R$ into $2^{nD}$ rectangular tiles 
(or ``boxes'') of the same size,\footnote{Actually, the boxes are not {\em all} of the same size, but are generated according to an {\em adaptative} tiling procedure (see Section \ref{ss:proc}).} where~$n$ is the dimension of the state space, and~$D$ the depth of bisection of~$R$. 
\item {\em controlled recurrence}:
for each tile $T$, try to find $i\in\{1,\dots,p\}$ such that, for {\em any} starting point in $T$, the trajectory governed by $\dot{x}=f_i(x)$,  reaches a collision hyperplane (here: $\theta_1+\theta_2=0$), which, after reset, belongs to~$R$.
\end{enumerate}
If, for some tile $T$, the search for an appropriate index $i\in\{1,\dots,p\}$ fails, one can restart the procedure with an incremented value of $D$ or an augmented set of setpoint values $\theta_{SP}^i$.

The controlled recurrence (item 3) is guaranteed using the method of {\em reachability with zonotopes} \cite{alt08,girard05}. 
The initial tile $T$ is seen as a
``zonotope'' \cite{kuhn1998zonotope}. We then compute 
an approximative form of successive discrete-time integrations of~$T$
for $\dot{x}=f_i(x)$, under the form of zonotopes. Let $h$ be the step size of the discrete-time integration sequence.
Let us first suppose  that $f_i$ is {\em linear} of
the form $f_i(x)=Ax+\theta_{SP}^i\ b$. Each $k$-th integration of~$T$ can be computed efficiently in an exact manner, using the structure of {\em zonotopes} (see \cite{girard05,girard2008zonotope}). 
The computation then stops at first step~$k$, say $N^-$, for which the $k$-th image of~$T$ intersects with $\eqref{eq:collision}$ (see Section~\ref{ss:model}). 
As explained in \cite{girard05}, it is easy to compute a lower bound  $\tau^-=N^-h$ of the first time 
(resp. upper bound~$\tau^+=N^+h$ of the last time) for which the intersection of the $k$-th image of~$T$ with hyperplane $\eqref{eq:collision}$ is non-empty. It is also possible to compute an overapproximation of the
{\em continuous} image of~$T$ during time  $t\in[\tau^{-},\tau^{+}]$ intersected with \eqref{eq:collision}
(see, e.g., \cite{girard2008zonotope}); this image 
is denoted
by $Post_{i}^{N}(T)\cap \eqref{eq:collision}$ . The reset mapping 
due to the collision with the floor is then applied, and the resulting set denoted by $Reset(Post_{i}^{N}(T)\cap \eqref{eq:collision})$. The controlled recurrence is guaranteed if, for each tile~$T$ of $R$, one
can find an index $i\in\{1,\dots,p\}$ satisfying: 
\begin{equation}
Reset(Post_{i}^{N}(T)\cap \eqref{eq:collision})\subseteq R.
\label{eq:reset}
\end{equation}

When~$f_i$ is a {\em non-linear} mapping, it is explained, e.g. in
\cite{alt08,girard05,girard2008zonotope}, how to extend the computation
of the image $Post_{i}^{N}(T)$ using zonotopes.
The basic idea is, in our context, to replace the 
nonlinear equation $\dot{x}=f_i(x)$ with an equation of the form 
\begin{equation}
\dot{x}=Ax+\theta_{SP}^i\ b+Hd
\label{eq:robot}
\end{equation}
where $Ax+\theta_{SP}^i\ b$ is the linearized form of $f_i(x)$, $H$ a constant matrix, and $d$ a
``perturbation'' term corresponding to the linearization error. 
It is supposed furthermore that an
upper bound~$\delta$ of~$\|d\|$ is known or can be evaluated 
(see Section \ref{ss:lin} for details).

In Figures \ref{fig:final_sims1}-\ref{fig:final_sims2}, 
the recurrence box $R$ is represented in color cyan, the initial tile $T$
as well as the successive discrete-time integration images
are in blue, and the zone obtained after the
reset operation, is represented in red. One can see that the final red zone is 
located inside $R$ (recurrence property). 
Figures \ref{fig:final_sims3}-\ref{fig:final_sims4} depict an analogous
behavior, but starting from another initial tile $T'$ of $R$.
We explain in further details in Section \ref{sec:control}
how such Figures are generated.

\begin{figure}[h]
\centering
\begin{tabular}{c}
\includegraphics[width=1.0\textwidth,clip,trim = 0cm 6cm 0cm 6cm]{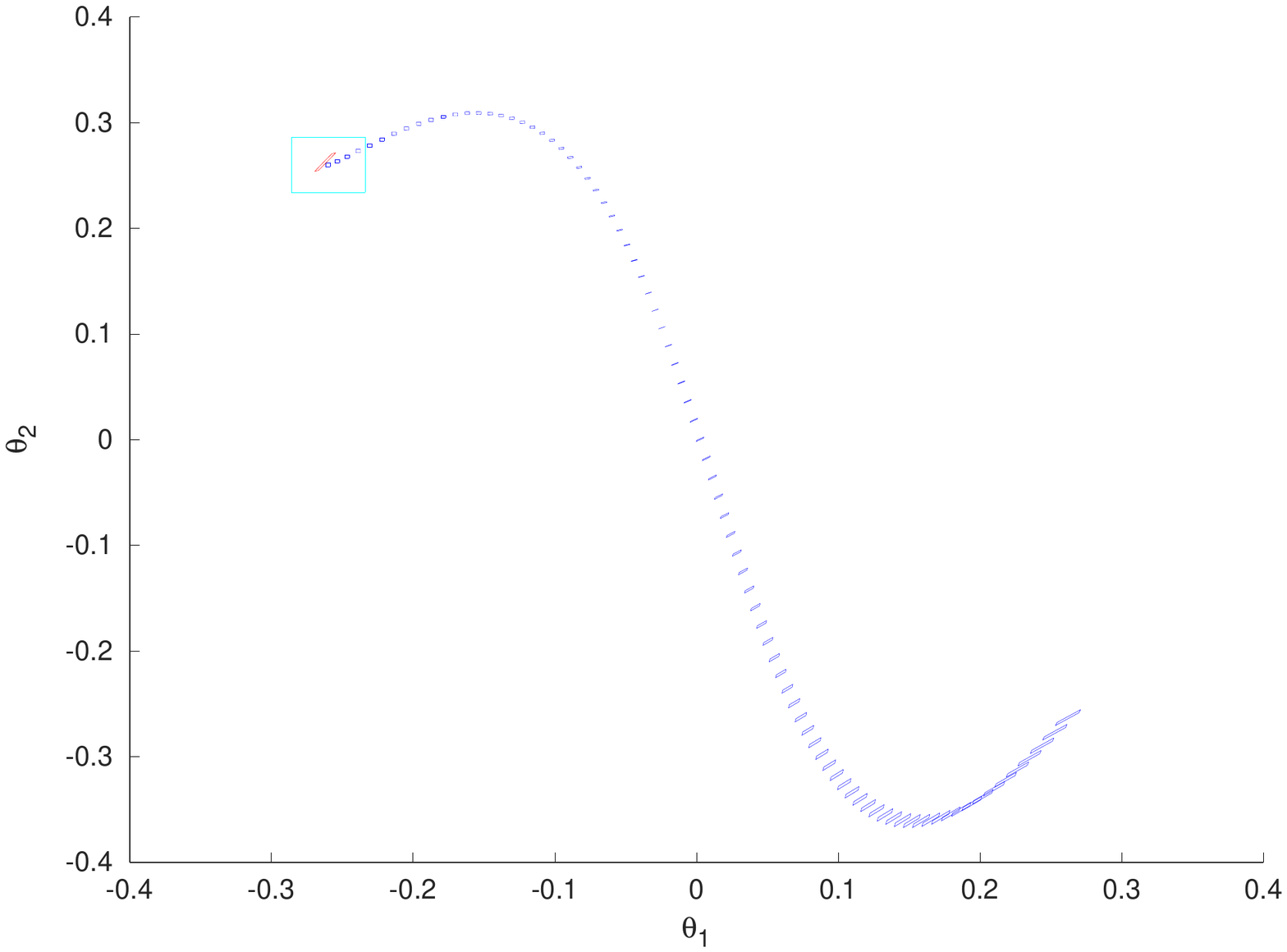} \\
\includegraphics[width=1.0\textwidth,clip,trim = 0cm 6cm 0cm 6cm]{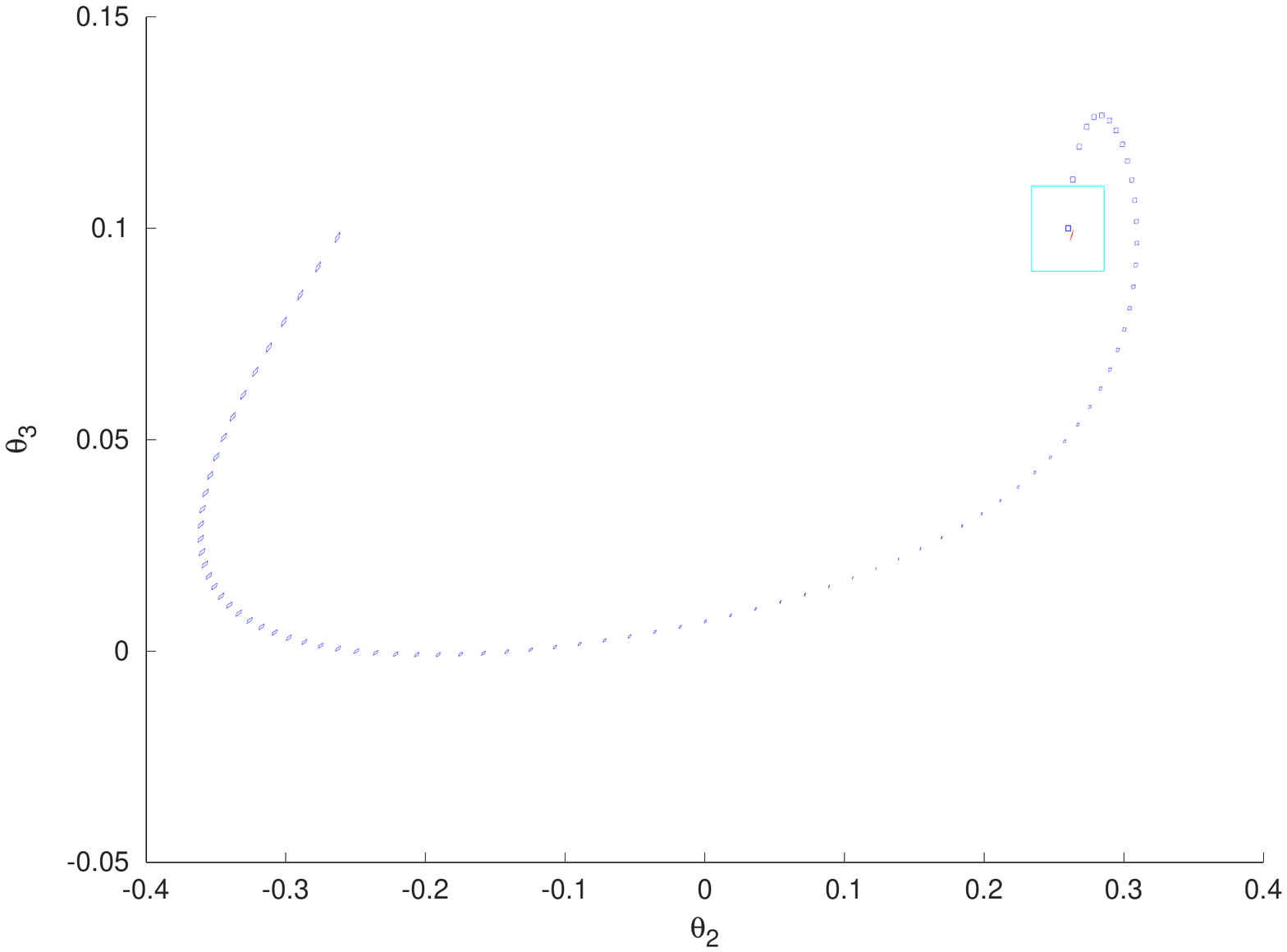} 
\end{tabular}
\caption{$Post^k_i(T)$ in the planes $(\theta_1,\theta_2)$ and $(\theta_2,\theta_3)$ 
for $K_p = 124.675$,
$K_d = 19.25$ and $\theta_{SP}^i = -0.075$.
The cyan boxes correspond to the projections of box $R$. The blue zones are the successive (projections of the)
images $Post^k_{i}(T)$ at discrete times, starting from
$T=[0.58263,0.59737]\times [0.273,0.287]\times [1.36144,1.37856]\times [-0.26162,-0.258375]\times [0.258375,0.26162]\times[0.099375,0.10063]$ located inside $R$. 
The red zones correspond to the final zonotopes, after the reset has been applied. }
\label{fig:final_sims1}
\end{figure}
\begin{figure}[h]
\centering
\begin{tabular}{c}
\includegraphics[width=1.0\textwidth,clip,trim = 0cm 6cm 0cm 6cm]{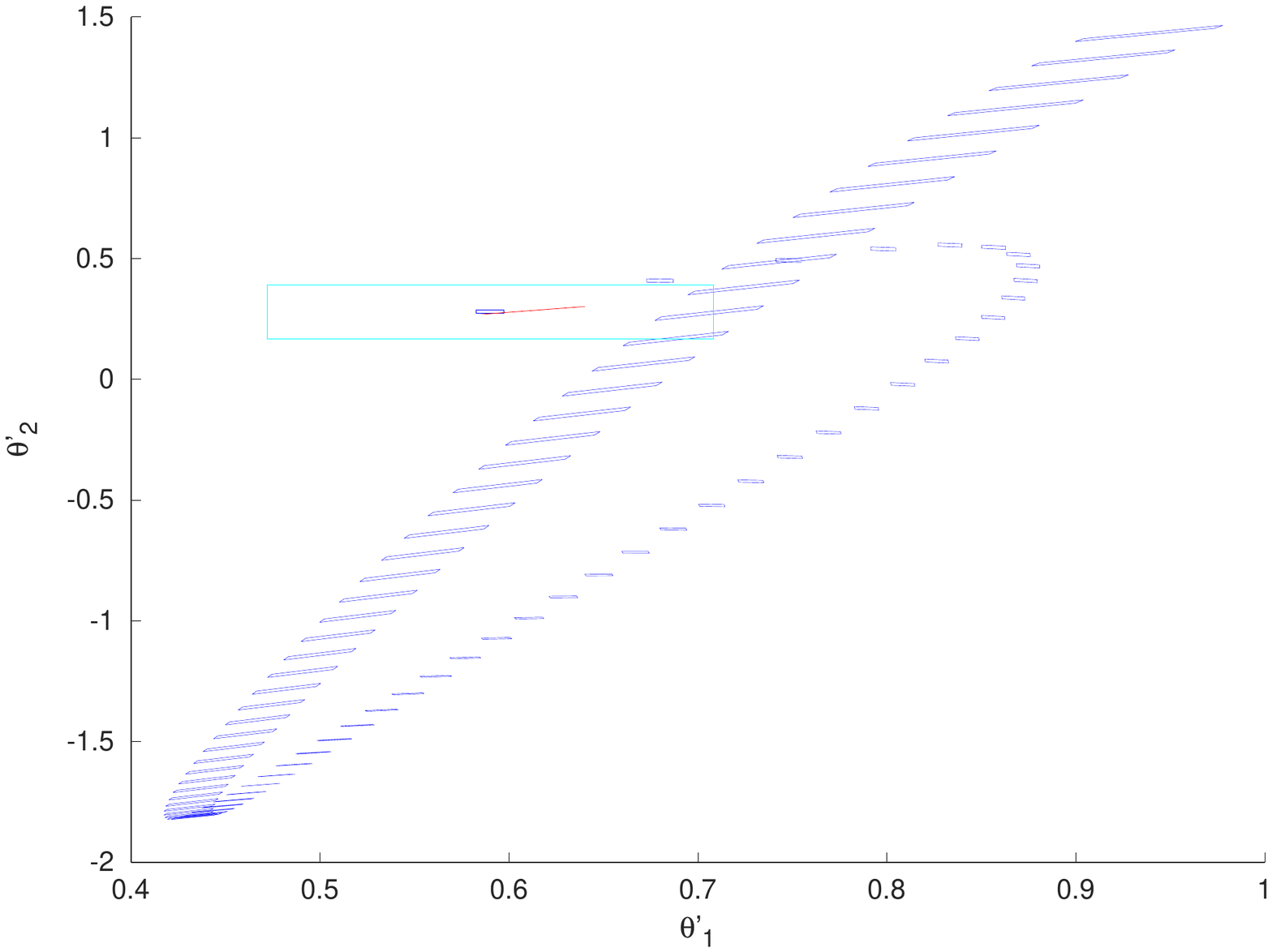}\\
\includegraphics[width=1.0\textwidth,clip,trim = 0cm 6cm 0cm 6cm]{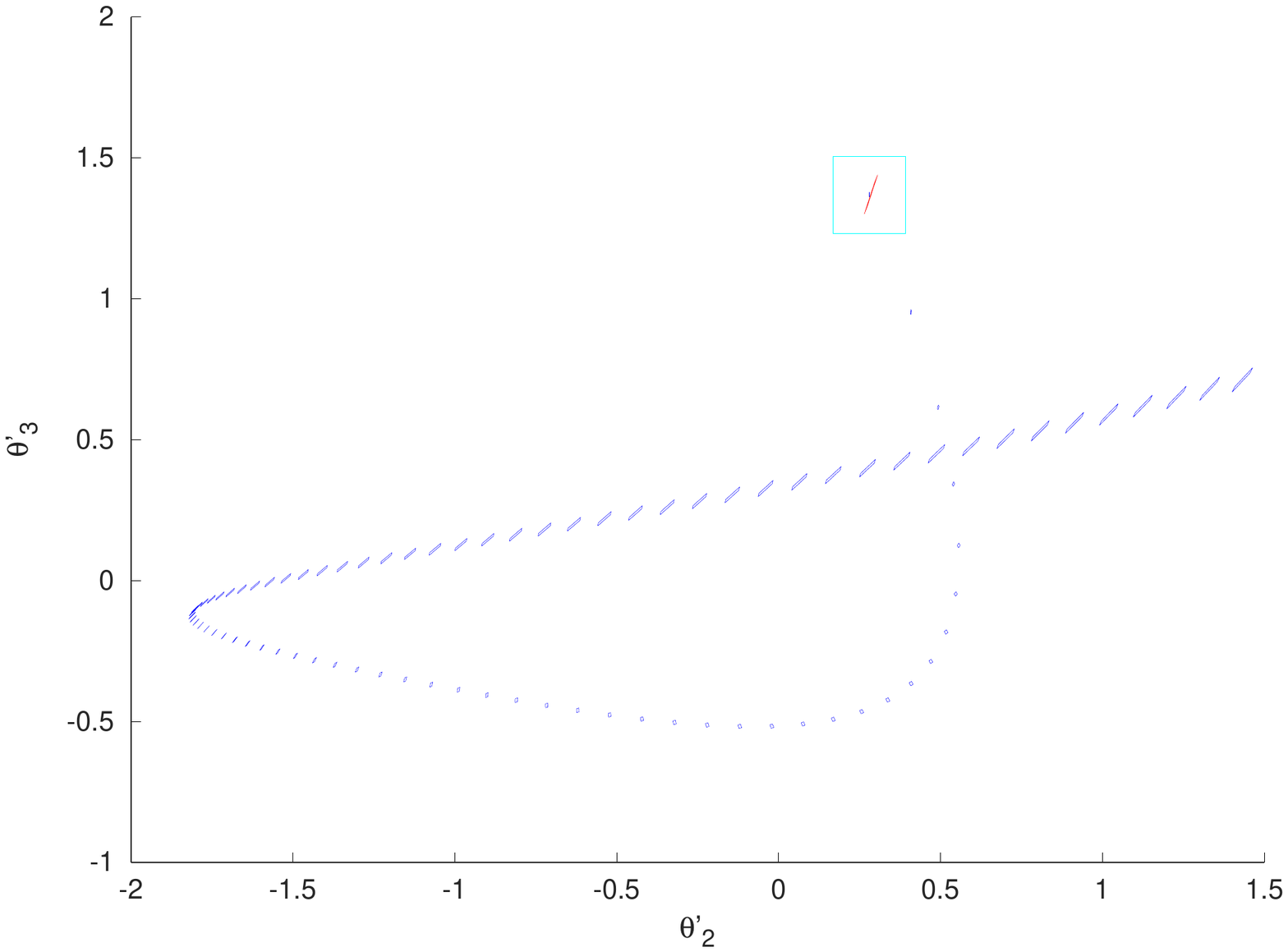}
\end{tabular}
\caption{$Post^k_i(T)$ in the planes $(\dot{\theta_1},\dot{\theta_2})$ and $(\dot{\theta_2},\dot{\theta_3})$ 
for $K_p = 124.675$,
$K_d = 19.25$ and $\theta_{SP}^i = -0.075$.
The cyan boxes correspond to the projections of box $R$. The blue zones are the successive (projections of the)
images $Post^k_{i}(T)$ at discrete times, starting from 
$T=[0.58263,0.59737]\times [0.273,0.287]\times [1.36144,1.37856]\times [-0.26162,-0.258375]\times [0.258375,0.26162]\times[0.099375,0.10063]$ located inside $R$. 
The red zones correspond to the final zonotopes, after the reset has been applied. }
\label{fig:final_sims2}
\end{figure}

\begin{figure}[h]
\centering
\begin{tabular}{c}
\includegraphics[width=1.0\textwidth,clip,trim = 0cm 6cm 0cm 6cm]{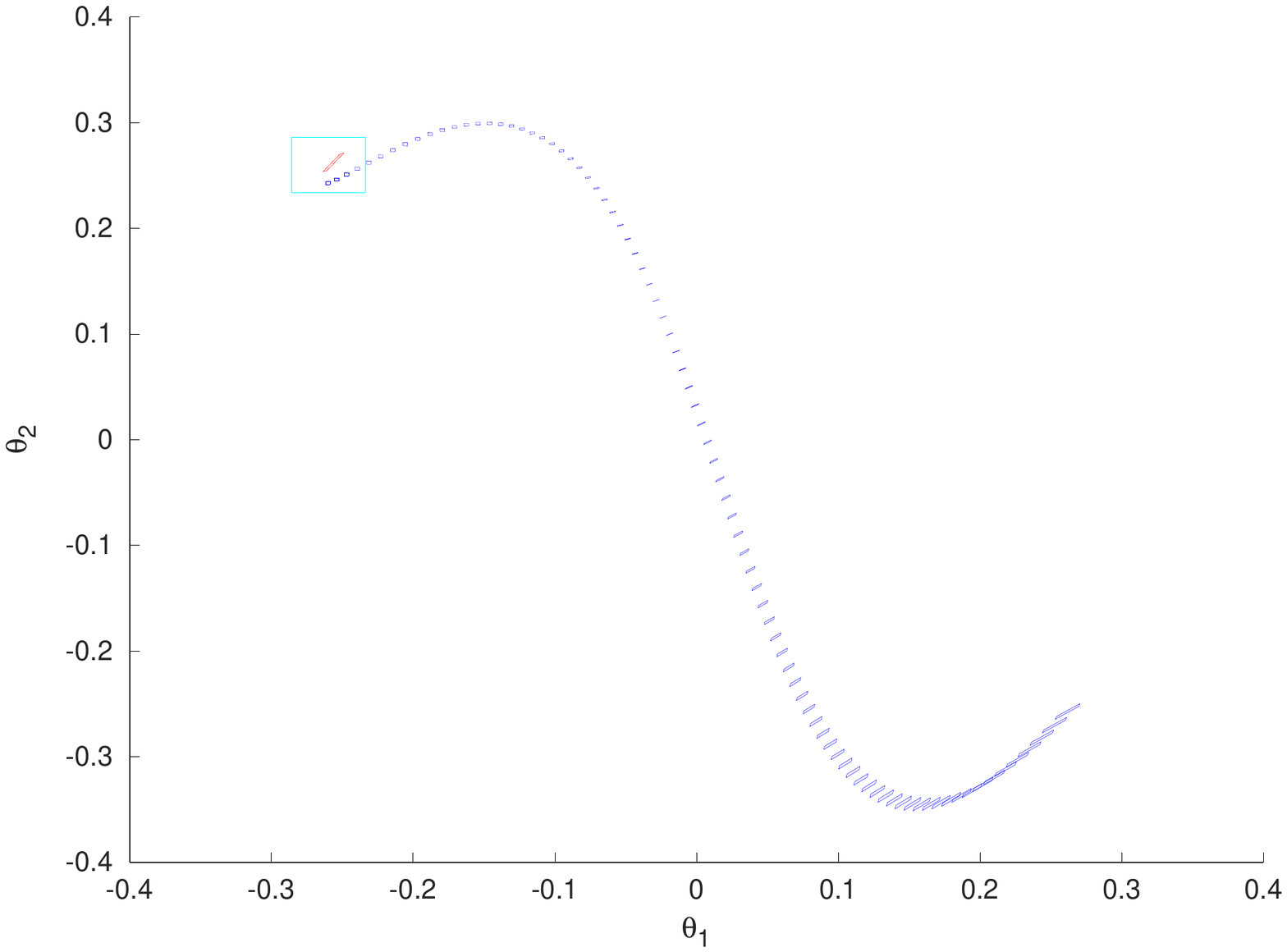} \\
\includegraphics[width=1.0\textwidth,clip,trim = 0cm 6cm 0cm 6cm]{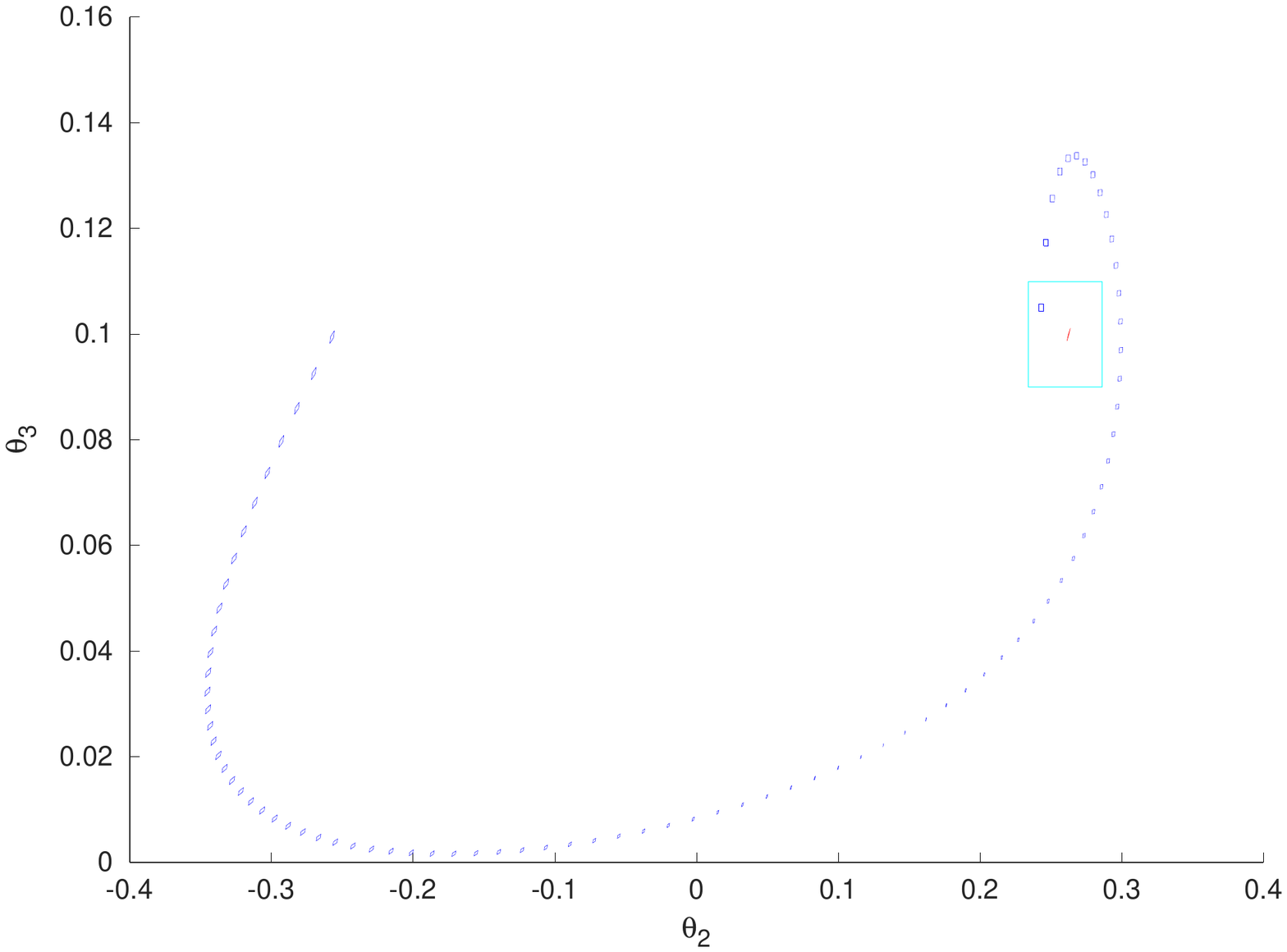} 
\end{tabular}
\caption{$Post^k_i(T')$ in the planes $(\theta_1,\theta_2)$ and $(\theta_2,\theta_3)$ 
for $K_p = 124.675$,
$K_d = 19.25$ and $\theta_{SP}^i = -0.075$.
The cyan boxes correspond to the projections of box $R$. The blue zones are the successive (projections of the)
images $Post^k_{i}(T')$ at discrete times, starting from 
$T'=[0.55300,0.56700]\times [0.2535,0.26650]\times [1.45087,1.46913]\times [-0.24452,-0.24148]\times [0.24218,0.24452]\times [0.10434,0.10566]$ located inside $R$. 
The red zones correspond to the final zonotopes, after the reset has been applied. }
\label{fig:final_sims3}
\end{figure}
\begin{figure}[h]
\centering
\begin{tabular}{c}
\includegraphics[width=1.0\textwidth,clip,trim = 0cm 6cm 0cm 6cm]{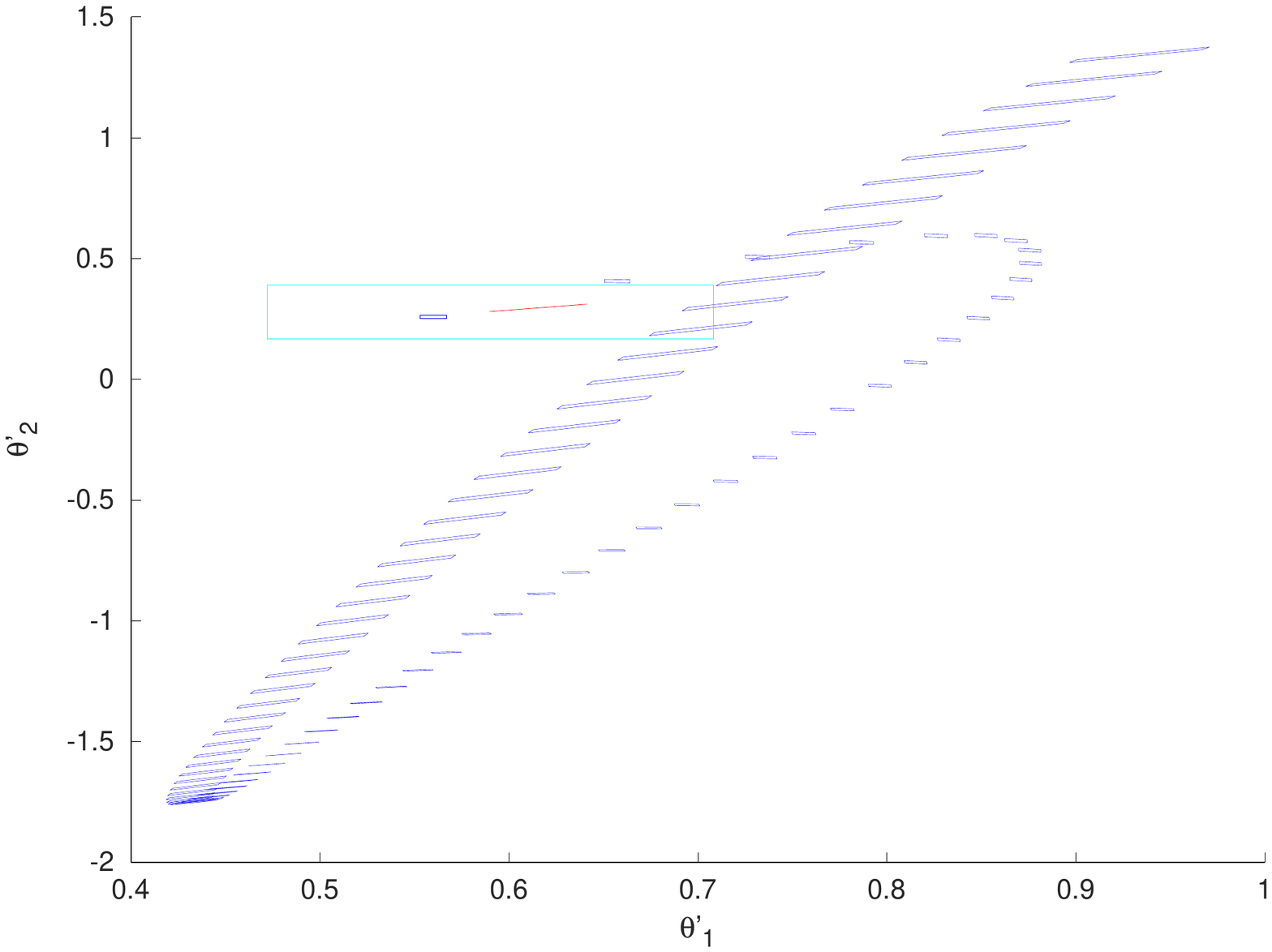}\\
\includegraphics[width=1.0\textwidth,clip,trim = 0cm 6cm 0cm 6cm]{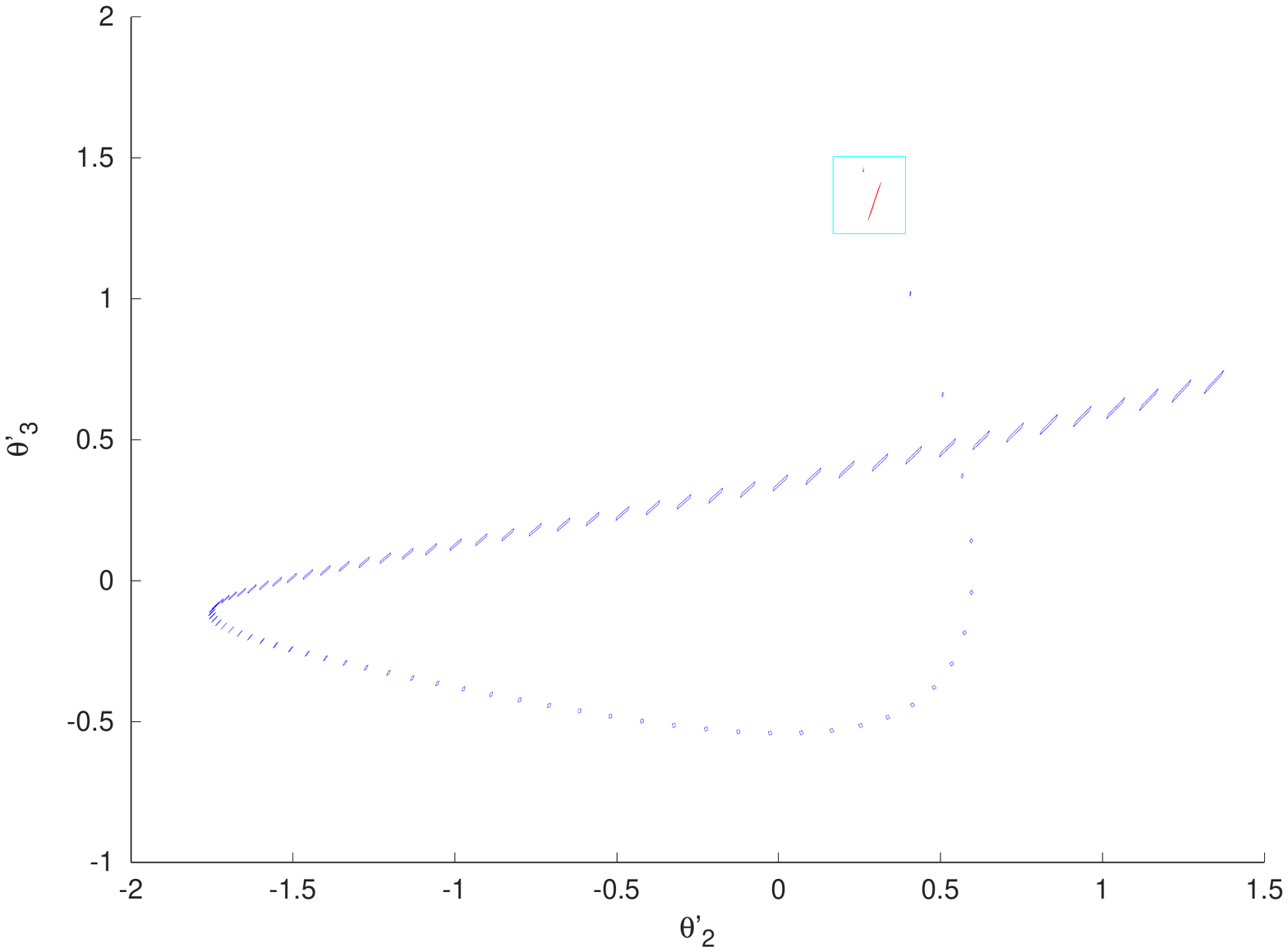}
\end{tabular}
\caption{$Post^k_i(T')$ in the planes $(\dot{\theta_1},\dot{\theta_2})$ and $(\dot{\theta_2},\dot{\theta_3})$ 
for $K_p = 124.675$,
$K_d = 19.25$ and $\theta_{SP}^i = -0.075$.
The cyan boxes correspond to the projections of box $R$. The blue zones are the successive (projections of the)
images $Post^k_{i}(T')$ at discrete times, starting from 
$T'=[0.55300,0.56700]\times [0.2535,0.26650]\times [1.45087,1.46913]\times [-0.24452,-0.24148]\times [0.24218,0.24452]\times [0.10434,0.10566]$ located inside $R$. 
The red zones correspond to the final zonotopes, after the reset has been applied. }
\label{fig:final_sims4}
\end{figure}
%
%
%
 \section{Application to the Biped with Torso}\label{sec:control}
 \subsection{Model}\label{ss:model}
The model is taken from \cite{feng2013biped}, to which the following text is 
mainly borrowed.
The dynamics of the robot consists of a swing phase starting with both feet touching the ground. 
A torque is applied between the stance leg and the torso, so the swing leg moves forward.
It is followed by a collision when both legs touch the ground again, meaning the end of a step.
Due to inherent symmetries in the robot, one can consider that once a step is finished, the previous stance leg becomes the new swing 
leg. This requires the application of a reset in the model (``collision'').
A collision happens when both feet touch the ground. The condition to be met 
for 
the collision is \eqref{eq:collision}: $\theta_1 + \theta_2 = 0$
\footnote{Condition \eqref{eq:collision} is true a first time when the legs are parallel, but we ignore such a ``scuffing'' and assume the swing leg to continue without collision.}.
%
%

Once the collision happens, conservation of the momentum and considering of symmetries in the system 
leads to a {\em reset} to apply, leading to a new set of initial conditions. The equations of the reset are the following:
\[
\begin{pmatrix}
I & 0 \\ 0 & L^n(\theta^+)
\end{pmatrix}
\begin{pmatrix}
\theta^+ \\ \dot \theta^+
\end{pmatrix}
= 
\begin{pmatrix}
Q & 0 \\ 0 & L^o(\theta^-)
\end{pmatrix}
\begin{pmatrix}
\theta^- \\ \dot \theta^-
\end{pmatrix}
\label{eq:collision2}
\]
where $I$ is the identity matrix, and
$$ Q = \begin{pmatrix}
0 & 1 & 0 \\
1 & 0 & 0 \\
0 & 0 & 1
 \end{pmatrix} $$
$$ L^n(\theta) = \begin{pmatrix}
L^n_{11}  L^n_{12} L^n_{13} \\
L^n_{21}  L^n_{22} L^n_{23} \\
L^n_{31}  L^n_{32} L^n_{33} 
 \end{pmatrix} $$

$$ L^o(\theta) = \begin{pmatrix}
L^o_{11}  L^o_{12} L^o_{13} \\
L^o_{21}  L^o_{22} L^o_{23} \\
L^o_{31}  L^o_{32} L^o_{33} 
 \end{pmatrix} $$
 with

 $L^n_{11} = -(m_h+m_l+m_u)(l_a+l_b)^2-m_l l_a^2 -+m_l(l_a+l_b)l_bcos(\theta_1^+-\theta_2^+)-m_u(l_a+l_b)l_ucos(\theta_1^+-\theta_3^+)$,
 $L^n_{12} = m_l l_b (l_a cos(\theta_1^+ - \theta_2^+) - l_b  + l_b  cos(\theta_1^+ - \theta_2^+))$,
 $L^n_{13} =  - m_l l_u (l_u +l_a cos(\theta_1^+ - \theta_3^+)+l_b cos(\theta_1^+ - \theta_3^+))$,
 $L^n_{21} = -m_u(l_a+l_b)l_ucos(\theta_1^+-\theta_3^+)$, 
 $L^n_{22} = 0$,
 $L^n_{23} = -m_u l_u^2$,
 $L^n_{31} = m_l (l_a + l_b)l_b cos(\theta_1^+-\theta_2^+)$,
 $L^n_{32} = - m_l l_b^2$,
 $L^n_{33} = 0$,
 
  $L^o_{11} = m_l l_a l_b -(m_h+m_u)(l_a+l_b)^2    cos(\theta_1^--\theta_2^-) - 2m_l (l_a+l_b)l_bcos(\theta_1^--\theta_2^-) -m_u(l_a+l_b)l_ucos(\theta_1^--\theta_3^-)$,
 $L^n_{12} = m_l l_a l_b$,
 $L^n_{13} =  - m_u l_u (l_u +l_a cos(\theta_2^+ - \theta_3^+)+l_b cos(\theta_2^+ - \theta_3^+))$,
 $L^n_{21} = -m_u(l_a+l_b)l_ucos(\theta_1^+-\theta_3^+)$, 
 $L^n_{22} = 0$,
 $L^n_{23} = -m_u l_u^2$,
 $L^n_{31} = m_l l_a l_b$,
 $L^n_{32} = 0$,
 $L^n_{33} = 0$.\\


A PD-controller controls the torque during the swing phase.
During a step, the dynamics of the robot is given by the nonlinear equation
\begin{equation}
M(\theta)\ddot \theta + N(\theta,\dot \theta) + G(\theta) = c\ u
\label{eq:robot2}
\end{equation}
with $\theta = (\theta_1,\theta_2,\theta_3)^\top$. The vector $u$ is the control input corresponding to a PD-controller defined by:
 \begin{equation}
 u = K_p (\theta_{SP} - (\theta_3 - \theta_1)) - K_d (\dot \theta_3 - \dot \theta_1)
\label{eq:robot2bis}
 \end{equation}
 where $\theta_{SP}$ is the ``setpoint''. In our context, $\theta_{SP}$ belongs to a finite set $U$ of values. The value of $\theta_{SP}$ in $U$ is chosen after each collision for the whole duration of the forthcoming footstep (i.e., till the next collision) in order to make the robot return in the recurrence zone $R$
(see Section \ref{sec:method}). The matrices $M$, $N$, $G$ and $b$ are of the form
$$M(\theta) = \begin{pmatrix}
M_{11}  M_{12} M_{13} \\
M_{21}  M_{22} M_{23} \\
M_{31}  M_{32} M_{33} 
 \end{pmatrix} $$
$$N(\theta, \dot \theta) = \begin{pmatrix}
N_{1} \\
N_{2} \\
N_{3} 
 \end{pmatrix} $$
$$G(\theta) = \begin{pmatrix}
G_{1} \\
G_{2} \\
G_{3} 
 \end{pmatrix} $$
 $$c = \begin{pmatrix}
-1 \\
0 \\
1
 \end{pmatrix} $$
with:

$M_{11} = (m_u +m_h +m_l)(l_a +l_b)^2 +m_ll_a^2$, 

$M_{{12}} = M_{21} = M^*_{12} cos(\theta_1 - \theta_2)$, 
with $M^*_{12}=-m_l(l_a +l_b)l_b$,

$M_{13} = M_{31} = M_{13}^* cos(\theta_1 -\theta_3)$
with $M_{13}^*=m_u(l_a +l_b)l_u$, 

$M_{22} = m_ll_b^2$, 
$M_{23} = M_{32} = 0$, 
$M_{33} = m_ul_u^2$ ; 

$N_1 = N_{12}^*sin(\theta_1 -\theta_2)\dot \theta_2^2 + N^*_{13} sin(\theta_1 - \theta_3)\dot \theta_3^2$, 

$N_{12}^*= -m_l(l_a +l_b)l_b$ and 
$N_{13}^*=m_u(l_a + l_b)l_u$.

$N_2 = N_2^*sin(\theta_1 - \theta_2)\dot \theta_{1}^2$
with $N_2^*=m_l(l_a + l_b)l_b $, 

$N_3 = N_3^*sin(\theta_1 - \theta_3)\dot \theta_{1}^2$
with $N_3^*=-m_u(l_a + l_b)l_u$; 

$G_1 = G_1^*sin(\theta_1)$
with $G_1^*=-((m_h + m_l + m_u)(l_a + l_b) + m_ll_a)g$, 

$G_2 = G_2^*sin(\theta_2)$
with $G_2^*=m_ll_bg$, 

$G_3 = G_3^*sin(\theta_3)$
with $G_3^*=-m_u l_ug$.\\

The linear form of $M(\theta)$ is $M^*\theta$ with:
$$M^* = \begin{pmatrix}
M_{11}  M^*_{12} M^*_{13} \\
M_{12}^*  M_{22} M_{23} \\
M_{13}^*  M_{32} M_{33} 
 \end{pmatrix} $$
Likewise, the linear form of $G(\theta)$ is $G^*\theta$ with
$G^*=(G_1^*,G_2^*,G_3^*)^{\top}$, and the linear form of $N$ is null.
%

A simulation of the PD-controller over two steps with 
$K_p = 124.675$ and
$K_d = 19.25$,
$\theta_{SP} = -0.075$ (for both steps) is given in Figure~\ref{fig:simu}.
\begin{figure}[h]
\centering
\includegraphics[width=0.95\textwidth,clip,trim = 0cm 6cm 0cm 6cm]{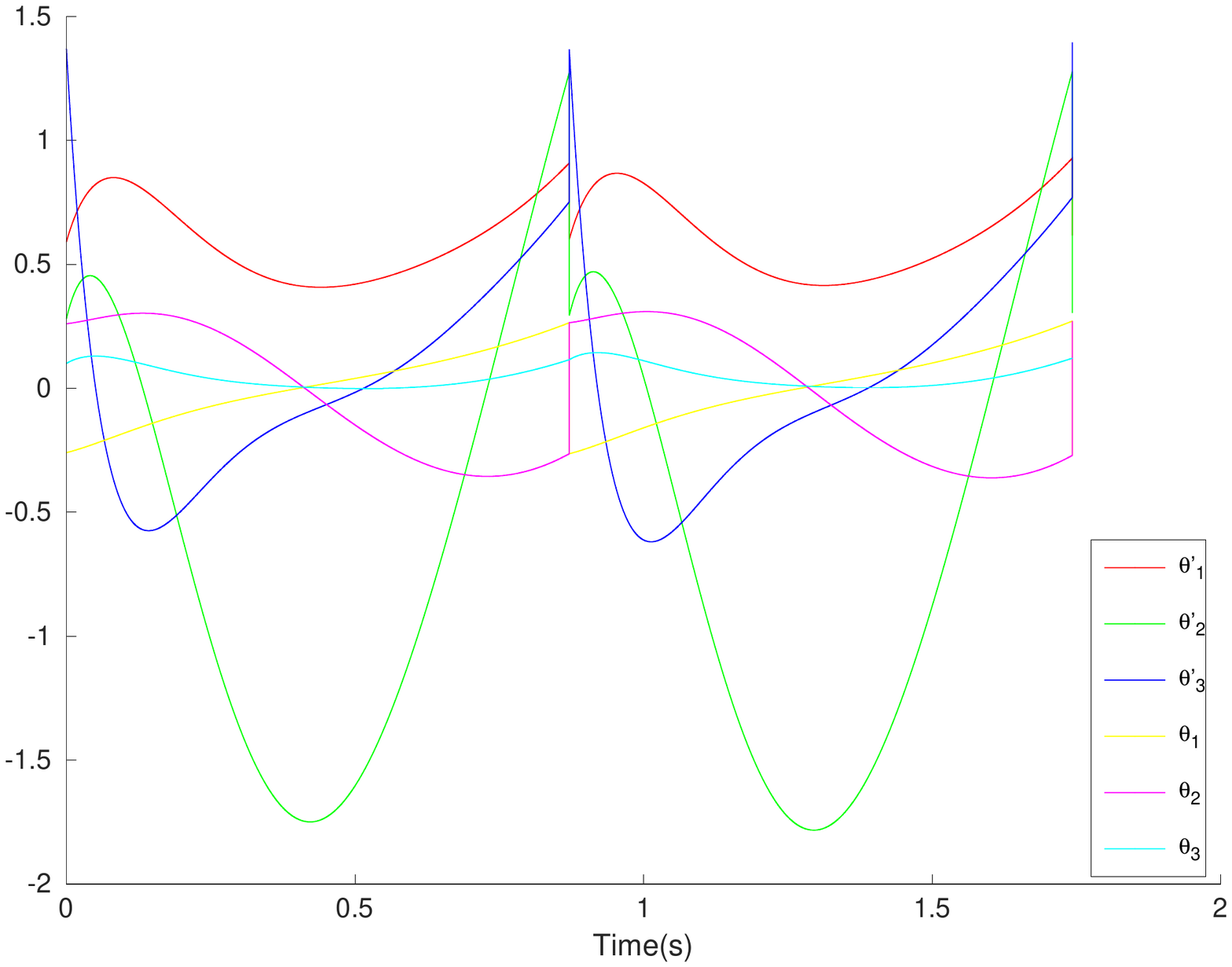}
\caption{Simulation of  two robot footsteps
for $K_p = 124.675$,
$K_d = 19.25$ and
$\theta_{SP} = -0.075$.}
\label{fig:simu}
\end{figure}
\subsection{Linearization with perturbation}\label{ss:lin}
From a general point of view,
we are interested in the control synthesis problem 
for a continuous-time dynamical system subject
to disturbances, described by the set of nonlinear 
ordinary differential equation:
\begin{equation}
  \dot x = f(x,d),
  \label{eq:switched_system0}
\end{equation}
 where $x \in \R^n$ is the state
of the system, and $d \in \R^m$ is a bounded
perturbation. 
The functions $f:\R^n \times \R^m \rightarrow \R^n$, is the vector field
describing the dynamics of the system. There exist today several efficient symbolic tools which perform reachability analysis of nonlinear systems,
and control them in a provably safe manner: eg.,
SpaceEx \cite{SpaceEX},
Flow*
\cite{Flow} or DynIBEX \cite{DynIBEX}.
%
Rather than using such tools, we propose here for the biped case study, to follow a {\em specific linearization} approach 
in order to take advantage of our tool MINIMATOR \cite{fribourg2014finite,fs-book13,minimator}
(cf. Section \ref{ss:proc}).
We first explain how to reformulate 
system~(\ref{eq:robot2}-\ref{eq:robot2bis}) under the linearized form~(\ref{eq:robot}). As mentioned in Section \ref{sec:method}, when the system is under form~\eqref{eq:robot}, one can easily construct (overapproximations of) reachable sets $Post_i^k(T)$
using
zonotopes 
(see, e.g., \cite{girard05}).

\begin{proposition}\label{prop:lin}
System (\ref{eq:robot2}-\ref{eq:robot2bis}) can be written under
the linearized form
with bounded perturbation $d=(d_1,d_2,d_3)^{\top}$:
 \[
 \dot x = A x + \theta_{SP} \ b  + H d
 \label{eq:statespace_error}
 \]
where 
\[
Hd = \begin{pmatrix}
(M^*)^{-1} \begin{pmatrix}
d_1 \\ d_2 \\ d_3
\end{pmatrix}
\\
0 \\ 0 \\ 0
\end{pmatrix}
\]
with, for $i=1,2,3$, $|d_i|\leq \delta_i$ for some constant $\delta_i>0$.
\end{proposition}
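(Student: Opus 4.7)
The plan is to rewrite (\ref{eq:robot2})--(\ref{eq:robot2bis}) in first-order state-space form and peel off the linear part. With $x = (\dot\theta_1,\dot\theta_2,\dot\theta_3,\theta_1,\theta_2,\theta_3)^\top$, the last three rows of $\dot x$ are the trivial identities $\dot\theta_i = \dot\theta_i$: these are already linear in $x$ and contribute nothing extra, which is precisely what forces the three zeros in the bottom block of $Hd$. So the whole argument concerns only the top three rows, obtained by solving (\ref{eq:robot2}) for $\ddot\theta$ with $u$ given by (\ref{eq:robot2bis}).

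The key algebraic step uses the decompositions announced just before the statement: $M(\theta) = M^* + \Delta M(\theta)$ where $\Delta M$ gathers the $\cos(\theta_i-\theta_j)$-dependent off-diagonal contributions, $G(\theta) = G^*\theta + \Delta G(\theta)$ where $\Delta G_i = G_i^*(\sin\theta_i - \theta_i)$, and the stated fact that the linear form of $N$ is null (so the entire $N$ term is a perturbation). Rearranging $M(\theta)\ddot\theta = cu - N(\theta,\dot\theta) - G(\theta)$ as
\begin{equation*}
M^*\,\ddot\theta \;=\; cu - G^*\theta + \underbrace{\bigl[-\Delta M(\theta)\,\ddot\theta - N(\theta,\dot\theta) - \Delta G(\theta)\bigr]}_{=:\,(d_1,d_2,d_3)^\top}
\end{equation*}
and multiplying by $(M^*)^{-1}$ gives $\ddot\theta = (M^*)^{-1}(cu - G^*\theta) + (M^*)^{-1}(d_1,d_2,d_3)^\top$. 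Substituting $u = K_p\theta_{SP} + K_p(\theta_1-\theta_3) + K_d(\dot\theta_1-\dot\theta_3)$ splits $cu - G^*\theta$ cleanly into a part linear in $x$ and a part proportional to $\theta_{SP}$. Combining this with the identity rows $\dot\theta_i = \dot\theta_i$ produces exactly $\dot x = Ax + \theta_{SP}\,b + Hd$ with $H$ of the required block form.

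The main obstacle is the boundedness claim $|d_i|\leq\delta_i$, which forces one to fix a compact domain. I would take the compact set $R\times U$ used throughout Section~\ref{sec:method} (the recurrence zone $R$ together with the finite setpoint set $U$): on it the entries of $\Delta G(\theta)$ (essentially $\sin\theta_i - \theta_i$ weighted by mass/length constants) and of $N(\theta,\dot\theta)$ (products of sines of angle differences with squared angular velocities) are uniformly bounded by elementary estimates. The one slightly delicate term is $\Delta M(\theta)\,\ddot\theta$, since $\ddot\theta$ is not itself a state coordinate. This is resolved by observing that the determinant of $M(\theta)$ is continuous and nonvanishing on $R$ (it reduces to that of the constant matrix $M^*$ at $\theta_1=\theta_2=\theta_3$, and a small continuous perturbation elsewhere), so $M(\theta)^{-1}$ is bounded on $R$ and $\ddot\theta = M(\theta)^{-1}(cu - N - G)$ is a continuous function of $(x,\theta_{SP})$ on the compact set $R\times U$, hence bounded. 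Componentwise suprema then yield explicit constants $\delta_1,\delta_2,\delta_3$.
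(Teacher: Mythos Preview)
Your algebraic rearrangement---writing $M(\theta)=M^*+\Delta M$, $G(\theta)=G^*\theta+\Delta G$, treating all of $N$ as perturbation, and defining $(d_1,d_2,d_3)^\top=-\Delta M(\theta)\ddot\theta-N(\theta,\dot\theta)-\Delta G(\theta)$---is exactly what the paper does, and your identification of the block structure of $A$, $b$, $H$ is correct.

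Where you and the paper diverge is in the bounding step. The paper does not appeal to compactness of $R\times U$; instead it produces \emph{explicit} per--time-step bounds via elementary Taylor remainders: $|\sin\theta_i-\theta_i|\le|\theta_i|^3/6$ for $\Delta G$, $|\cos(\theta_i-\theta_j)-1|\le(\theta_i-\theta_j)^2/2$ for $\Delta M$, and $|\sin(\theta_i-\theta_j)|\le|\theta_i-\theta_j|$ for $N$, each multiplied by the appropriate constants and by $|\theta_i|_{\max}$, $|\dot\theta_i|_{\max}$, $|\ddot\theta_i|_{\max}$ taken over the current integration step $[kh,(k+1)h]$. In particular $|\ddot\theta_i|_{\max}$ is treated as a primitive quantity read off from the current zonotope, not bounded via invertibility of $M(\theta)$ as you propose. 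These explicit $\delta_i$ are what feed into the check $\delta\le\frac{K_p}{10}|\theta_{SP}^i|$ in the algorithm, so the paper's route buys something your abstract compactness argument does not: computable, step-dependent constants.

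There is also a genuine gap in your choice of compact domain. The box $R$ is the \emph{recurrence} zone, hit only at the reset instants; during the swing phase the trajectory leaves $R$ (this is visible in Figures~\ref{fig:final_sims1}--\ref{fig:final_sims4}). A bound derived from compactness of $R\times U$ therefore does not cover the states where the linearization error actually needs to be controlled. The paper sidesteps this by bounding over each integration step along the computed reachable tube rather than over a fixed a priori set.
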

\begin{proof}
Given an expression of the form $e(t)$, let us denote by $e_{\max}$ the maximum of $e(t)$ over the $k$-th integration time step (of length $h$):\footnote{The expression $e_{\max}$ differs for each $k$, and the notation should be $e_{\max}^k$, but the upper index $k$ is dropped for the sake of simplicity.}
$$e_{\max} := \max_{t\in[kh,(k+1)h]}e(t).$$
In the context of the biped model, we have:

$G_i=G^*_i\sin(\theta_i)$ satisfies
$G_i=G^*_i\theta_i +d_i^G$ with, for $i=1,2,3$:
$|d_i^G|\leq \delta_i^G$ with 
\[
\delta_i^G := \frac{|G^*_i|}{6} |\theta_i|_{\max}^3.
\]
$N_1=d_1^N=N^*_{12}\sin(\theta_1-\theta_2)\dot{\theta_2}^2
+ N^*_{13}\sin(\theta_1-\theta_3)\dot{\theta_3}^2$ satisfies 
$|d_1^N|\leq \delta_1^N$ with 
\[
\delta_1^N := |N^*_{12}| |\dot{\theta_2}|_{\max}^2|\theta_1-\theta_2|_{\max}
+ |N^*_{13}| |\dot{\theta_3}|_{\max}^2|\theta_1-\theta_3|_{\max}.
\]
$N_2=d_2^N=N^*_2\sin(\theta_1-\theta_2)\dot{\theta_1}^2$ satisfies
$|d_2^N|\leq \delta_2^N$ with 
\[
 \delta_2^N := |N^*_2| |\dot{\theta_1}|_{\max}^2|\theta_1-\theta_2|_{\max}.
\]
$N_3=d_3^N=N^*_3\sin(\theta_1-\theta_3)\dot{\theta_1}^2$ satisfies 
$|d_3^N|\leq \delta_3^N$ with 
\[
 \delta_3^N := |N^*_3| |\dot{\theta_1}|_{\max}^2|\theta_1-\theta_3|_{\max}.
\]
$ M_{12}= M_{21}=M_{12}^*\cos(\theta_1-\theta_2)= M_{12}^* +d_{12}^M$ with:
$|d_{12}^M|\leq \delta_{12}^M$ with 
\[
\delta_{12}^M := \frac{1}{2}|M_{12}^*| (\theta_1-\theta_2)_{\max}^2.
\]
$ M_{13}=M_{31}=M_{13}^*\cos(\theta_1-\theta_3)= M_{13}^* +d_{13}^M$ with:
$|d_{13}^N|\leq \delta_{13}^M$ with 
\[
\delta_{13}^M := \frac{1}{2}|M_{13}^*| (\theta_1-\theta_3)_{\max}^2.
\label{eq:error3}
\]

Let us write 
\[
B^p =
\begin{pmatrix}
-K_p & 0 & K_p \\
0 & 0 & 0 \\
-K_p & 0 & K_p
\end{pmatrix}
\]
and 
\[
B^d =
\begin{pmatrix}
-K_d & 0 & K_d \\
0 & 0 & 0 \\
-K_d & 0 & K_d
\end{pmatrix}
\]

Let us write
\[
A = \begin{pmatrix}
\begin{pmatrix} (M^*)^{-1} B^d \end{pmatrix} &  \begin{pmatrix}(M^*)^{-1}(- G^* + B^p) \end{pmatrix} \\
 \begin{pmatrix}
1 & 0 & 0 \\ 0 & 1 & 0 \\ 0 & 0 & 1 
\end{pmatrix}
 & 
  \begin{pmatrix}
0 & 0 & 0 \\ 0 & 0 & 0 \\ 0 & 0 & 0 
\end{pmatrix} 
\end{pmatrix}
\]
and 
\[
\theta_{SP}\ b = \begin{pmatrix}
(M^*)^{-1} \begin{pmatrix}
- K_p \theta_{SP} \\ 0 \\ K_p \theta_{SP}
\end{pmatrix}
\\
0 \\ 0 \\ 0
\end{pmatrix}
\]

System (\ref{eq:robot2}-\ref{eq:robot2bis}) can thus be reformulated as 
the following linearized system 
with bounded perturbation $d=(d_1,d_2,d_3)^{\top}$ as
 follows:
 \[
 \dot x = A x + \theta_{SP} \ b  + H d
 \label{eq:statespace_error2}
 \]
where 
\[
Hd = \begin{pmatrix}
(M^*)^{-1} \begin{pmatrix}
d_1 \\ d_2 \\ d_3
\end{pmatrix}
\\
0 \\ 0 \\ 0
\end{pmatrix}
\]
with, for $i=1,2,3$, $|d_i|\leq \delta_i$ with:
\[
\delta_1 := \delta_1^G + \delta_1^N + \delta_{12}^M |\ddot{\theta_2}|_{\max}+\delta_{13}^M|\ddot{\theta_3}|_{\max}
\]
%
\[
\delta_2 :=  \delta_2^G + \delta_2^N + \delta_{12}^M |\ddot{\theta_1}|_{\max}
\]
\[
\delta_3 := \delta_3^G + \delta_3^N + \delta_{13}^M |\ddot{\theta_1}|_{\max}
\]
We have thus obtained a system of the form \eqref{eq:robot}
with 
$d=(d_1,d_2,d_3,0,0,0)^T$. Furthermore, we have: $|d_i|\le \delta_i$ for $i=1,2,3$.
\hspace*{\fill} $\Box$
\end{proof}


When we perform discrete-time integration, we will now check  that, at each time
step (of length~$h$), the norm of the perturbation $(M^*)^{-1}d$ is always less than or equal to 
$\frac{1}{10}\| \theta_{SP}^i b\|$. More precisely, 
at each $k$-th step,
we check that the upper bound of the linearization error $\delta :=\max\{\delta_1,\delta_2,\delta_3\}$ satisfies:
\begin{equation}
\delta\leq \frac{K_p}{10} |\theta^i_{SP}|.
\label{eq:error}
\end{equation}
This guarantees that the linearization error (seen as a perturbation) is always ``small''
with respect to the constant term $\theta_{SP}^i b$ of 
\eqref{eq:robot}. Given an initial tile~$T$, we then construct an overapproximation of
$Post_i^k(T)$ 
for $1\leq k\leq N^+$, by 
\begin{enumerate}
\item  computing the images of $T$ through successive
discrete-time {\em linear} integrations,  and
\item extending these images to account for error $\delta$.
\end{enumerate}
Both operations are efficiently performed
using zonotopes.
A similar approach (linearization with addition of a disturbation term) can be done for the collision phase. The sets $Post_i^N(T)$ 
(corresponding to the {\em continuous-time} reachable set $Post_i^t(T)$ for $t\in[N^-h,N^+h]$)
and $Reset(Post_i^N(T))$ (due to the collision phase) are then computed along the lines of the method 
sketched out in Section \ref{sec:method}.

\subsection{MINIMATOR procedure}\label{ss:proc}

 In order to prove the recurrence property
\eqref{eq:reset}, we adapt the MINIMATOR algorithm
defined in~\cite{fribourg2014finite}:
given the input box $R$ and
a positive integer $D$, the algorithm provides, when
it succeeds, a decomposition (by bisection) $\Delta$ of $R$ of the form $\{ (T_i,
\theta_{SP}^i) \}_{i \in I}$, with the properties:
\begin{itemize}
\item $\bigcup_{i \in I} T_i = R$,
\item $\forall i \in I, \ Reset(Post^{N}_{i}(T_i)) \subseteq R$
and, for all $k=1,\dots,N^+$,\  $\delta\leq \frac{K_p}{10}|\theta_{SP}^i|$.
\end{itemize}

Here $R$ is  a ``box'' (i.e., a cartesian product of
real intervals), which is seen here as a special form of zonotope.
The tiles $\{ T_i \}_{i \in I}$ are sub-boxes obtained by repeated
bisection.  At first, function $Decomposition$ calls sub-function
$Find\_Control$ which looks for a value $\theta_{SP}^i \in U$ 
such that $Reset(Post^{N}_{i}(R)) \subseteq R$.  If such a value $\theta_{SP}^i$ is
found, then a uniform control over $R$ is found. 
Otherwise, $R$ is divided into two
sub-boxes $T_1$, $T_{2}$, by bisecting $R$ w.r.t. its longest
dimension. Values of $\theta_{SP}$ are then searched to control these sub-boxes. 
If for each $T_i$, function
$Find\_Control$ manages to get a value for $\theta_{SP}^i$ 
verifying $Reset(Post^{N}_{i}(T_i)) \subseteq R$, then it is
a success and algorithm stops.  
If, for some $T_j$, no such mode is
found, the procedure is recursively applied to $T_j$.  It ends with
success when every sub-box $T_i$ of $R$ has a value of $\theta_{SP}^i$ verifying the latter
conditions, or fails when the maximal degree of decomposition $D$ is
reached.  The algorithmic form of functions $Decomposition$ and
$Find\_Control$, adapted from \cite{fs-book13}, are given in Algorithm~\ref{algo:decomposition} and
Algorithm~\ref{algo:findpattern} respectively. The procedure is called initially
with $Decomposition(R,R,D)$, i.e. $T := R$.

 \begin{algorithm}[ht]
  \centering
  \begin{algorithmic}
    \STATE{\textbf{Function:} $Decomposition(T,R,D)$}
    \STATE{\begin{center}\line(1,0){150}\end{center}}
    \STATE{\quad \textbf{Input:} A box $T$, a box $R$, a degree $D$ of bisection}
    \STATE{\quad \textbf{Output:}$\langle\{(T_i,\theta_{SP}^i)\}_{i},True\rangle$ or $\langle\_ ,False\rangle$}
    \STATE{\begin{center}\line(1,0){150}\end{center}}
    \STATE{ $(\theta_{SP},bool) := Find\_Control(T,R)$}
    \IF{$bool=True$}{
      \STATE{\textbf{return} $\langle\{(T,\theta_{SP})\},True\rangle$}
    }
    \ELSE
    \IF{$D = 0$} \RETURN{$\langle \_,False\rangle$} \ELSE
    \STATE{Divide equally $T$ into $(T_1, T_{2})$ \FOR{$i=1,2$}\STATE{\small{$(\Delta_i,bool_i)$ := $Decomposition(T_i,R,D - 1)$}}\ENDFOR
      \RETURN $(\bigcup_{i=1,2} \Delta_i,\bigwedge_{i=1,2} bool_i)$ } \ENDIF
    \ENDIF
  \end{algorithmic}
  \caption{Algorithmic form of Function $Decomposition$.}
  \label{algo:decomposition}
\end{algorithm}

 \begin{algorithm}[ht]
  \centering
  \begin{algorithmic}
    \STATE{\textbf{Function:} $Find\_Control(T,R)$}
    \STATE{\begin{center}\line(1,0){150}\end{center}}
    \STATE{\quad \textbf{Input:}A box $T$, a box $R$}
    \STATE{\quad \textbf{Output:}$\langle \theta_{SP},True\rangle$ or $\langle\_, False\rangle$}
    \STATE{\begin{center}\line(1,0){150}\end{center}}
    \STATE{$U :=$ finite set of values of $\theta_{SP}^i$}
    \WHILE{$U$ is non empty} \STATE{Select $\theta_{SP}^i$ in $U$}
    \STATE{$U:= U\setminus  \{\theta_{SP}^i\}$}
    \IF{$Reset(Post^{N}_{i}(T)\cap \eqref{eq:collision}) \subseteq R$
with, for all $1\leq k\leq N^+$, $\delta\leq \frac{K_p}{10}|\theta_{SP}^i|$}{\RETURN{$\langle \theta_{SP}^i,True\rangle$}} 
\ENDIF
    \ENDWHILE
    \RETURN{$\langle \_,False \rangle$}
  \end{algorithmic}
  \caption{Algorithmic form of Function $Find\_Control$.}
  \label{algo:findpattern}
\end{algorithm}


\subsection{Experimentation}\label{ss:exp}

The verification procedure is implemented in an adaptation of the tool MINIMATOR \cite{minimator},
using the interpreted language Octave, and the experiments are performed on a 2.80 GHz Intel Core i7-4810MQ CPU with 8 GB of
memory. 

The physical constants of the biped robot
are the same as those given in \cite{feng2013biped}:
$m_u=m_h=10$, $m_l=5$, $l_u=l_b=l_a=0.5$, $g=9.81$.
The  PD-constants used here are: $K_p = 124.675$, $K_d = 19.25$.
The set of possible values for the setpoint $\theta_{SP}$ is
$U=\{-0.07 + 0.001\ i\}_{i=0,\pm 1,\pm 2,\dots, \pm 9}$.

The verification procedure requires 3h hours of computation time.
We use an integration
time step of length $h=10^{-2}s$.
We manage to control entirely the zone 
$R=[0.48,0.72]\times [0.18,0.42]\times [1.26,1.54]\times [-0.286,-0.234]\times [0.234,0.286]\times [0.09,0.11]$
with a bisection depth $D = 4$: for each tile $T$ of $R$ there is some $\theta_{SP}\in U$ which makes $T$ return to $R$.
For example, as illustrated in 
Figures~\ref{fig:final_sims1}-\ref{fig:final_sims2},
the value $\theta_{SP}=-0.075$
makes the initial tile 
$T=[0.58263,0.59737]\times [0.273,0.287]\times [1.36144,1.37856]\times [-0.26162,-0.258375]\times [0.258375,0.26162]\times[0.099375,0.10063]$
return to $R$.
Figures~\ref{fig:final_sims3}-\ref{fig:final_sims4} show that
the same value $\theta_{SP}=-0.075$ achieves also the recurrence property for
tile 
$T'=[0.55300,0.56700]\times [0.2535,0.26650]\times [1.45087,1.46913]\times [-0.24452,-0.24148]\times [0.24218,0.24452]\times [0.10434,0.10566]$\footnote{In Figures \ref{fig:final_sims1}-\ref{fig:final_sims4}, we did not plot {\em all} the images $Post^k(T)$,
and $Post^k(T')$, for $1\leq k\leq N^+$, but only some of them for the sake of readability of the pictures. }.

\section{Final remarks}\label{sec:conc}

We have shown how a direct symbolic method for proving controlled recurrence 
successfully applies to the robot model of \cite{feng2013biped}.
Up to our knowledge, this is the first time that a 
symbolic method has synthesized a control correct-by-construction 
for a bipedal robot of dimension 6.

It would be interesting to try this method 
to  higher dimensional robots or 
other hybrid systems with impact, possibly using
a preliminary step of order reduction along the lines of
\cite{AHHFMPASG17}.
This would probably require to use, not a specific linearization technique as here, but a general procedure designed for 
nonlinear reachability analysis, as in SpaceEx \cite{SpaceEX},
Flow*
\cite{Flow} or DynIBEX \cite{DynIBEX}.

\bibliographystyle{plain}
\bibliography{biblio}
\end{document}